\documentclass[journal]{IEEEtran}

\usepackage[utf8]{inputenc}
\usepackage[T1]{fontenc}
\usepackage{graphicx}
\usepackage{amsmath, amssymb, amsthm}
\usepackage{algorithm}
\usepackage{algpseudocode}
\usepackage{booktabs}
\usepackage{hyperref}
\usepackage{url}
\usepackage{microtype}
\usepackage{xcolor}
\usepackage{subcaption}
\usepackage{tabularray}
\usepackage{tablefootnote}
\usepackage{eucal}
\usepackage{wrapfig}

\newtheorem{thm}{Theorem}[section]
\newtheorem{lem}[thm]{Lemma}

\newtheorem{defn}{Definition}[section]

\begin{document}

\title{\texttt{Range-Arithmetic}: Verifiable Deep Learning Inference on an Untrusted Party}

\author{
\IEEEauthorblockN{Ali Rahimi$^{\dagger}$ and Babak H. Khalaj$^{\dagger}$}
\IEEEauthorblockA{$^{\dagger}$Department of Electrical Engineering,\\ Sharif University of Technology}
}

\maketitle

\begin{abstract}
Verifiable computing (VC) has gained prominence in decentralized machine learning systems, where resource-intensive tasks like deep neural network (DNN) inference are offloaded to external participants due to blockchain limitations. This creates a need to verify the correctness of outsourced computations without re-execution.
We propose \texttt{Range-Arithmetic}, a novel framework for efficient and verifiable DNN inference that transforms non-arithmetic operations, such as rounding after fixed-point matrix multiplication and ReLU, into arithmetic steps verifiable using sum-check protocols and concatenated range proofs. Our approach avoids the complexity of Boolean encoding, high-degree polynomials, and large lookup tables while remaining compatible with finite-field-based proof systems. Experimental results show that our method not only matches the performance of existing approaches, but also reduces the computational cost of verifying the results, the computational effort required from the untrusted party performing the DNN inference, and the communication overhead between the two sides.
\end{abstract}

\section{Introduction}
By leveraging blockchain technology for transparent coordination, incentives, and auditability, decentralized machine learning (ML) systems enable participants to collaboratively train or infer using models without relying on centralized infrastructure. This shift not only enhances scalability and fault tolerance, but also lays the groundwork for trustworthy and autonomous AI applications across open and dynamic environments. 

A core challenge in decentralized ML systems is ensuring the integrity of computations that are outsourced to external, and potentially untrusted, executors. When decentralized ML systems delegate heavy ML tasks, such as inference or training, to off-chain workers, they must be able to trust the correctness of the results without re-executing the entire computation, which is often prohibitively expensive. This need has led to the application of verifiable computing (VC) techniques in this domain. 

VC techniques enable a \emph{verifier} to confirm the correctness of a computation performed by an untrusted \emph{prover} without re-executing the computation. The three main metrics for evaluating a verifiable computing algorithm are: the size of the communication (or \emph{proof}), the computational cost for the verifier, and the prover's effort in generating the proof. Interactive proofs (IPs) are one of the promising approaches to verifiable computing~\cite{thaler2022proofs, kalai2023snargs, bootle2021sumcheck}. As illustrated in Figure~\ref{fig:overview}, in this approach, the prover sends a claimed result which the verifier challenges over several rounds. The verifier can detect, with high probability, if the claimed result is incorrect, whether due to a discrepancy between the model computed by the prover and the model expected by the verifier, errors during computation, or intentional misrepresentation by the prover. 

VC techniques have gained significant attention from both industry and academia, particularly for secure, trustless, and verifiable outsourced machine learning computation. Notable initiatives include EZKL~\cite{ezkl}, Polygon zkEVM~\cite{polygon}, Accountable Magic~\cite{accountablemagic}, zkAGI~\cite{zkagi}, Noya~\cite{noya}, ZKML Systems~\cite{zkml}, and Provably AI \cite{provably}. Recent academic work has also contributed with surveys and protocols focusing on \emph{verifiable ML}~\cite{peng2025survey, zhan2024validating, lycklama2024artemis, li2024sparsity, weng2023pvcnn}.

\begin{figure*}[t]
\centering
\includegraphics[width=0.8\textwidth]{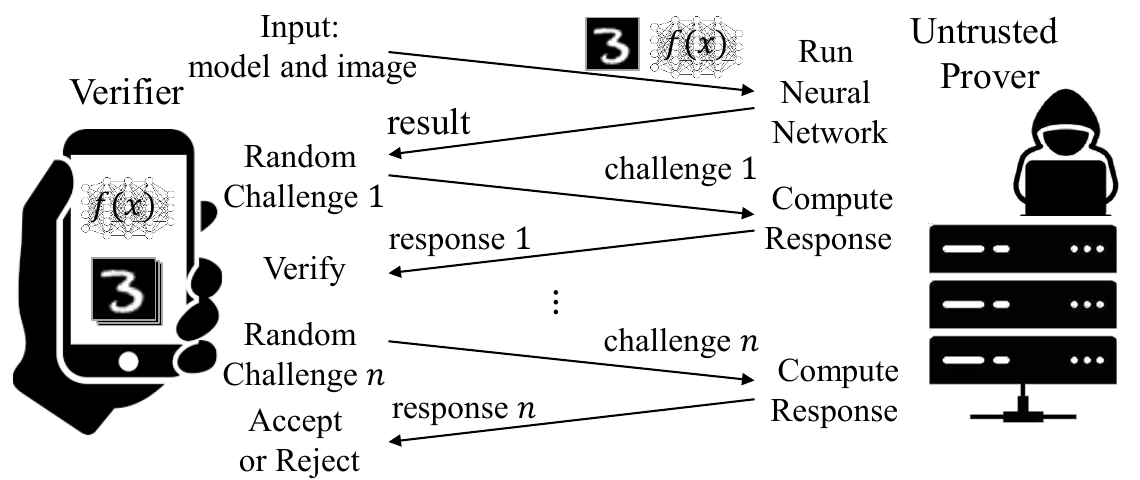}
\caption{A schematic overview of general interactive verifiable computing, where the goal is to compute a function \( f(\cdot) \) at input \( x \) (e.g., performing inference using model \( f(\cdot) \) on input \( x \)). The prover executes the computation and generates a proof of correctness by responding to challenges posed by the verifier.}
\label{fig:overview}
\end{figure*}

One major critique of VC schemes is that they operate over finite fields, often integers modulo a large prime $p$. Furthermore, these schemes are generally restricted to arithmetic computations, that is, computations expressible as compositions of additions and multiplications over the chosen field. However, many widely used operations in ML computations, such as fixed-point arithmetic and non-linear activation functions like ReLU, do not naturally conform to this arithmetic model, making them difficult to represent and verify efficiently within conventional VC frameworks. The following overview of prior work contextualizes our approach within the broader landscape of verifiable ML.

\subsection{Related Work}
\label{sec:related}

To address the challenge of verifying non-arithmetic operations in ML, several lines of work have been explored.

\paragraph{Arithmetic-only approaches.}
Early works such as \cite{weng2021mystique} convert fixed-point operations into binary circuits, incurring $O(n^3 \log n)$ prover overhead and $O(n^2)$ communication due to bitwise decomposition.
The ring-based method of \cite{chen2022interactive} embeds rounding into high-degree polynomials over $\mathbb{Z}_{p^e}$, which, despite $O(\log n)$ communication, introduces a large multiplicative factor $p e$ in prover complexity.
Commit-and-prove schemes \cite{garg2022succinct} and MPC-in-the-head \cite{garg2023experimenting} either expand circuits or simulate multi-party computation, leading to $O(n^2)$ or $O(n^3)$ communication and no ReLU support.
The integer-scaling technique of \cite{dao2024more} (and its precursor \cite{ghodsi2017safetynets}) achieves $O(n^3)$ prover, $O(n^2)$ verifier, and $O(\log n)$ communication, but requires a large finite field to prevent overflow and does not support ReLU.

\paragraph{Specialized ReLU proof systems.}
Recent works have focused specifically on non-linear activations.
\cite{hao2024scalable} introduced lookup-table based proofs for ReLU and sigmoid, reporting over $50\times$ speedups for these functions alone.
However, their method requires large auxiliary tables and does not address fixed-point rounding.
\cite{zkrelu2025} achieves $O(n^2)$ prover complexity by specializing exclusively to ReLU, but it cannot verify matrix multiplication with rounding.
\cite{zip2025} supports high-precision floating-point but uses generic circuit arithmetization, leading to higher constants and $O(n^2)$ communication.
\cite{jolt2026} combines sum-check with lookup arguments, achieving $O(\log n)$ communication, but again relies on precomputed tables for non-linear functions.

\paragraph{General ZKML frameworks.}
Industrial frameworks such as \cite{ezkl} provide end-to-end verifiable inference by compiling models into R1CS constraints.
While they support ReLU and rounding, their generic approach incurs substantial constant overhead – for instance, proof sizes in the kilobyte to megabyte range and prover times orders of magnitude higher than specialized methods (see the benchmark study by \cite{orochi2025}).
\cite{validcnn2025} uses Freivald's technique for matrix multiplication verification, but does not handle rounding or ReLU natively.

\paragraph{Our contribution.}
In this paper, we introduce \texttt{Range-Arithmetic}, a framework that fuses sum-check protocols for arithmetic verification with range proofs for rounding and activation functions. By representing rounding and ReLU through concatenated range constraints, our approach maintains compatibility with finite-field arithmetic and avoids extensive preprocessing, Boolean circuit encodings, or large lookup tables. We show how these operations can be combined in a unified framework to verify ML inference tasks without sacrificing efficiency or generality.

As shown in Table~\ref{tab:comparison}, many recent methods share the same asymptotic complexity classes for matrix multiplication with rounding – $O(n^3)$ prover, $O(n^2)$ verifier, and often $O(\log n)$ or $O(n^2)$ communication. However, \texttt{Range-Arithmetic} is the first to simultaneously achieve:
\begin{itemize}
    \item $O(n^3)$ prover and $O(n^2)$ verifier complexity (matching the best prior work),
    \item $O(\log n)$ communication (matching \cite{dao2024more} and \cite{jolt2026}),
    \item full support for both fixed-point rounding and ReLU without Boolean circuits, large finite fields, or lookup tables,
    \item no preprocessing and no R1CS encoding overhead.
\end{itemize}
While lookup-based methods (\cite{hao2024scalable}, \cite{jolt2026}) may be faster for pure ReLU, they require large tables that are impractical for resource-constrained verifiers. Our method replaces tables with lightweight range proofs, and the square-equality check for ReLU adds minimal overhead. Moreover, unlike \cite{dao2024more}, we do not need to enlarge the field to avoid overflow, making our scheme more practical for cryptographically sized primes.

For numerical comparison, we focus on the state-of-the-art method \cite{dao2024more}, which is the latest in a sequence of progressively improving works and has gained popularity in practice. Since \cite{dao2024more} does not support ReLU, we compare on inference for a linear neural network. In our scheme, rounding is applied after each multiplication, whereas \cite{dao2024more} requires enlarging the finite field. As illustrated in Figure~\ref{fig:mesh8}, \texttt{Range-Arithmetic} achieves superior performance as the number of layers increases. Due to the lack of well-maintained codebases, numerical comparisons with other methods were infeasible.

Thus, \texttt{Range-Arithmetic} provides the most balanced trade-off among prover effort, verifier cost, communication, and feature completeness for verifiable DNN inference on an untrusted party.

\begin{table*}[t]
\centering
\small
\caption{Comparison of verifiable computation methods for verifying the multiplication of two \( n \times n \) matrices with rounding, evaluated in terms of verifier and prover computational complexity, communication complexity, and support for ReLU verification.}
\label{tab:comparison}
\begin{tblr}{
  width = \linewidth,
  colspec = {Q[c,110]Q[c,85]Q[c,85]Q[c,85]Q[c,85]},
  row{1} = {font=\bfseries, bg=gray!20},
  row{2-Z} = {bg=gray!5},
  hlines,
  vline{1-2,6} = {1pt},
}
Method & Prover Compl. & Verifier Compl. & Comm. Compl. & Verifying ReLU \\
\hline\hline
\SetCell[c=5]{c, bg=gray!10, font=\bfseries} Foundational ZK Protocols \\
Mystique~\cite{weng2021mystique} & $O(n^3 \log n)$ & $O(n^2)$ & $O(n^2)$ & Yes (sum-check) \\
Ring-based~\cite{chen2022interactive} & $O(p e n^3)$ & $O(n^2)$ & $O(\log n)$ & No \\
Commit-and-Prove~\cite{garg2022succinct} & $O(n^3 \log n^3)$ & $O(n^{1.5})$ & $O(n^2)$ & No \\
MPC-in-head~\cite{garg2023experimenting} & $O(n^3)$ & $O(n^3)$ & $O(n^3)$ & No \\
\hline
\SetCell[c=5]{c, bg=gray!10, font=\bfseries} Matrix / CNN Verification \\
Integer-scaling~\cite{dao2024more} & $O(n^3)$ & $O(n^2)$ & $O(\log n)$ & No \\
Hao et al.~\cite{hao2024scalable} & $O(n^3)$ & $O(n^2)$ & $O(n^2)$ & Yes (lookup) \\
ZIP~\cite{zip2025} & $O(n^3)$ & $O(n^2)$ & $O(n^2)$ & Yes (Approximation) \\
Jolt Atlas~\cite{jolt2026} & $O(n^3)$ & $O(n^2)$ & $O(\log n)$ & Yes (lookup+teleport) \\
ValidCNN~\cite{validcnn2025} & $O(n^3)$ & $O(n^2)$ & $O(n^2)$ & No \\
EZKL~\cite{ezkl} & $O(n^3)$ & $O(n^2)$ & $O(\text{KB--MB})$ & Yes (circuit) \\
\hline
\SetCell[c=5]{c, bg=gray!10, font=\bfseries} ReLU‑Specialized Methods \\
zkReLU~\cite{zkrelu2025} & $O(n^2)$ & $O(n^2)$ & $O(n^2)$ & Yes (specialized) \\
\textbf{Proposed Method} & $O(n^3)$ & $O(n^2)$ & $O(\log n)$ & Yes (range proof $+$ sum-check) \\
\end{tblr}
\end{table*}

\begin{figure}[t]
\centering
\includegraphics[width=0.9\columnwidth]{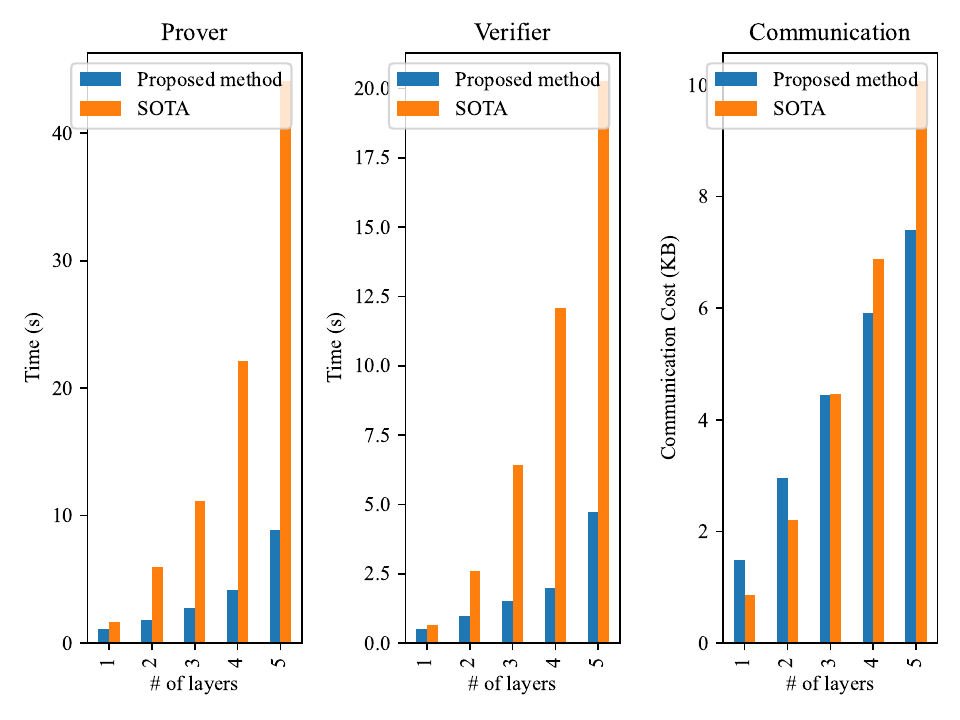}
\caption{Comparison of prover and verifier runtimes, as well as communication costs, for verifiable inference on a linear neural network, relative to the state-of-the-art method in~\cite{dao2024more}.}
\label{fig:mesh8}
\end{figure}

The remainder of the paper is organized as follows. Section~\ref{sec:Preliminaries} reviews the necessary prerequisites and presents an overview of fixed-point arithmetic, including its conversion to finite field representations. We then describe the verification algorithms, covering vector multiplication, polynomial commitment, the Sum-Check protocol, and range-proof techniques. Our proposed algorithm is detailed in Section~\ref{Proposed method}. Section~\ref{Experimentalresults} presents the results of our simulations and evaluates the algorithm’s performance from multiple perspectives. Finally, Section~\ref{Futuredirection} discusses potential directions for future research.

\section{Preliminaries} 
\label{sec:Preliminaries}
\textbf{Notation}. Vectors are denoted by boldface lowercase letters, e.g., $\mathbf{a}$. Matrices are denoted by boldface uppercase letters, e.g., $\mathbf{A}$. Polynomials are denoted by lowercase letters, for example, $a$. We denote the inner product between two vectors by $\left <\mathbf{a}, \mathbf{b} \right >$. For a finite field $\mathbb{F}$, a group $\mathbb{G}$, a  given vector $\mathbf{a} \in \mathbb{F}^n$, and a vector $\mathbf{g} \in \mathbb{G}^n$ of generators, $\mathbf{g}^\mathbf{a} = \prod_{i=1}^n g_i^{a_i}$. For two vectors $\mathbf{a}, \mathbf{b} \in \mathbb{F}^n$, the element-wise product is represented by $\mathbf{a} \circ \mathbf{b}$ which is equal to $\left (a_1 b_1, \dots, a_n b_n \right )$. For $a, b\in \mathbb{R}$, $\left [a,b \right ]$ denotes the set of all numbers $x \in \mathbb{R}$, where  $a \leq x \leq b$. For integers $1 \leq u \leq v \leq n$, and a vector  $\mathbf{a} = (a_1, \dots, a_n)$, then $\mathbf{a}[u-1, v]$ denotes the sub-vector $(a_{u}, \dots ,a_{v})$ of the vector $\mathbf{a}$. Similarly, $a[:u]:= (a_1, \dots,a_u)$, and $\mathbf{a}[u:]:=(a_{u+1}, \dots ,a_n)$.

\begin{defn} 
For any prime number \( p > 2 \), the \textbf{symmetric finite field}, denoted \( \mathbb{F}_p \) (or simply \( \mathbb{F} \)), can be represented by the symmetric complete residue system 
$\left\{ -\frac{p-1}{2}, \ldots, \frac{p-1}{2} \right\}$,
which serves as an alternative to the standard system \( \{0, 1, \ldots, p-1\} \). Field operations such as addition and multiplication are performed modulo \( p \), with the results mapped back into this symmetric representative set.
\end{defn}

\begin{defn}
\label{MLEdef}
Let \( e: \{0,1\}^v \rightarrow \mathbb{F} \) be a function defined over the Boolean hypercube for some integer \( v \). The \textbf{multilinear extension} (MLE) of \( e \), denoted \( \tilde{e}: \mathbb{F}^v \rightarrow \mathbb{F} \), is a multivariate polynomial satisfying the following properties:
\begin{itemize}
    \item For all \( b_1, \dots, b_v \in \{0,1\} \), we have \( \tilde{e}(b_1, \dots, b_v) = e(b_1, \dots, b_v) \).
    \item The polynomial \( \tilde{e}(x_1, \dots, x_v) \) is \emph{multilinear}, i.e., it is linear in each input variable individually.
\end{itemize}
\end{defn}

It has been shown that $\tilde{e}$ can be uniquely written as~\cite{thaler2022proofs}: 
\begin{equation} 
\label{equation_MLE} 
\tilde{e} \left (x_1, \dots, x_v \right ) := \sum_{\mathbf{y} \in \left \{0,1 \right \}^v} e(\mathbf{y}) . \prod_{i=1}^{v} \left (x_i y_i + \left (1 - x_i \right ) \left (1 - y_i \right ) \right ),
\end{equation} 
for any $\left(x_1, \dots, x_v \right )\in \mathbb{F}^v$.

\begin{lem}
\label{Schwartz} 
\textbf{(Schwartz–Zippel Lemma~\cite{schwartz1980fast, zippel1979probabilistic})}  
Let \( f, g: \mathbb{F}^v \rightarrow \mathbb{F} \) be distinct multilinear polynomials. If \( \mathbf{r} \) is sampled uniformly at random from \( \mathbb{F}^v \), then the probability that \( f(\mathbf{r}) = g(\mathbf{r}) \) is at most \( \frac{v}{|\mathbb{F}|} \); that is,
\[
\Pr[f(\mathbf{r}) = g(\mathbf{r})] \leq \frac{v}{|\mathbb{F}|}.
\]
\end{lem}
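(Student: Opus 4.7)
The plan is to reduce to showing that a single nonzero multilinear polynomial has few roots, and then induct on the number of variables. First I would set $h = f - g$. Since $f \neq g$ and both are multilinear, $h$ is a nonzero multilinear polynomial in $v$ variables, and the event $\{f(\mathbf{r}) = g(\mathbf{r})\}$ is exactly $\{h(\mathbf{r}) = 0\}$. It therefore suffices to prove the claim: for any nonzero multilinear $h \colon \mathbb{F}^v \to \mathbb{F}$, $\Pr[h(\mathbf{r}) = 0] \leq v/|\mathbb{F}|$ when $\mathbf{r}$ is uniform on $\mathbb{F}^v$.

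I would proceed by induction on $v$. For the base case $v=1$, $h(x_1) = a x_1 + b$ with $(a,b) \neq (0,0)$; if $a = 0$ then $h$ is a nonzero constant and has no roots, while if $a \neq 0$ then $h$ has exactly one root, so $\Pr[h(r_1)=0] \leq 1/|\mathbb{F}|$ in either case. For the inductive step, exploit multilinearity in $x_1$ to write
\[
h(x_1,\dots,x_v) \;=\; x_1 \cdot h_1(x_2,\dots,x_v) + h_0(x_2,\dots,x_v),
\]
where $h_0, h_1$ are multilinear polynomials in $v-1$ variables and at least one of them is nonzero because $h \not\equiv 0$.

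If $h_1 \equiv 0$, then $h = h_0$ is a nonzero multilinear polynomial in $v-1$ variables, and the inductive hypothesis gives $\Pr[h(\mathbf{r}) = 0] \leq (v-1)/|\mathbb{F}| \leq v/|\mathbb{F}|$. Otherwise $h_1 \not\equiv 0$, and I would condition on the values of $r_2,\dots,r_v$. By the inductive hypothesis, $\Pr[h_1(r_2,\dots,r_v) = 0] \leq (v-1)/|\mathbb{F}|$. On the complementary event $h_1(r_2,\dots,r_v) \neq 0$, the map $r_1 \mapsto h(r_1,r_2,\dots,r_v)$ is a nonzero affine polynomial in $r_1$, which vanishes for at most one value of $r_1$, so the conditional probability of a root is at most $1/|\mathbb{F}|$. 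Combining the two cases via a union bound yields
\[
\Pr[h(\mathbf{r})=0] \;\leq\; \frac{v-1}{|\mathbb{F}|} + \frac{1}{|\mathbb{F}|} \;=\; \frac{v}{|\mathbb{F}|},
\]
completing the induction. There is no real obstacle here beyond carefully isolating the two cases in the inductive step; the only subtle point is that multilinearity is what justifies both the decomposition $h = x_1 h_1 + h_0$ with $h_0, h_1$ multilinear of one fewer variable, and the fact that the reduced polynomials remain multilinear so the inductive hypothesis applies.
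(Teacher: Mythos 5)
The paper does not prove this lemma at all; it is quoted from the cited references~\cite{schwartz1980fast, zippel1979probabilistic}, so there is no in-paper argument to compare against. Your proof is correct: reducing to the vanishing probability of the nonzero multilinear polynomial $h=f-g$ and inducting on the number of variables, splitting off $h = x_1 h_1 + h_0$ and handling the cases $h_1\equiv 0$ and $h_1\not\equiv 0$ with a union bound, is exactly the standard Schwartz--Zippel argument specialized to the multilinear setting (where the total degree is at most $v$, giving the $v/|\mathbb{F}|$ bound), and all steps, including the base case and the independence of $r_1$ from $(r_2,\dots,r_v)$, are handled properly.
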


\subsection{Review of fixed-point arithmetic}
\label{Fixedpointarithmetic}
In this paper, we focus on fixed-point arithmetic, a widely used technique for managing overflow. In this approach, rounding is applied after each multiplication to discard the least significant bits, thereby preventing unbounded growth in memory usage. Specifically, each fixed-point number consists of three components: a sign bit, \( t \) bits for the integer part, and \( s \) bits for the fractional part.

Let \( p \) be a prime with at least \( s + t + 3 \) bits. We represent a real number \( a' \) as a field element \( a = 2^s a' \in \mathbb{F}_p \). When multiplying two fixed-point numbers with \( s \) fractional bits, the result may contain up to \( 2s \) fractional bits. To retain the target precision, a rounding operation truncates the \( s \) least significant bits and rounds to the nearest fixed-point number.

This rounding can be equivalently performed in the field representation using the operator \( \mathfrak{R} \), defined as:
\[
\mathfrak{R}(x) = \frac{x + 2^{s-1} - \left( x + 2^{s-1} \bmod 2^s \right)}{2^s}.
\]
An example of this rounding process is illustrated in Figure 1 in the supplementary material.

\subsection{Inner-product argument}
Consider a system involving a prover and a verifier. Let \( \mathbf{g}, \mathbf{h} \in \mathbb{G}^n \) be two publicly known vectors, consisting of \( 2n \) generators from a group \( \mathbb{G} \). The prover sends two elements to the verifier: \( P \in \mathbb{G} \) and \( c \in \mathbb{F} \), with the goal of convincing the verifier that they possess two vectors \( \mathbf{a}, \mathbf{b} \in \mathbb{F}^n \) such that
\[
P = \mathbf{g}^\mathbf{a} \mathbf{h}^\mathbf{b} \quad \text{and} \quad c = \langle \mathbf{a}, \mathbf{b} \rangle.
\]

To verify this claim, the verifier runs Algorithm 2 (see supplementary material), inspired by the Bulletproofs protocol~\cite{bunz2018bulletproofs}. This algorithm is communication-efficient: the prover transmits only \( 2 \log_2(n) \) group elements, rather than the full vectors \( \mathbf{a} \) and \( \mathbf{b} \), which would require sending \( 2n \) elements. For simplicity, we assume that \( n \) is a power of two.

\subsection{Polynomial commitment}
\label{secPolycommit}
Consider a system comprising a prover and a verifier. The prover holds a multilinear polynomial \( q \), and the verifier wishes to evaluate \( q \) at an arbitrary point \( \mathbf{z} \in \mathbb{F}^v \). A naïve approach would have the prover transmit all coefficients of \( q \), allowing the verifier to compute the evaluation locally. However, this incurs high communication cost and substantial computational effort for the verifier.

A more efficient alternative is to use a polynomial commitment scheme, as described in Algorithm 3 (see supplementary material). In this setting, the evaluation \( q(\mathbf{z}) \) is computed via an inner product. Let \( \mathbf{z} = [z_1, \dots, z_v] \). Then \( q(\mathbf{z}) \) can be written as:
\[
q(\mathbf{z}) = \sum_{\mathbf{i} \in \{0,1\}^v} a_\mathbf{i} \cdot \chi_\mathbf{i}(\mathbf{z}),
\]
where \( a_i \) are the coefficients of \( q \), and \( \chi_i(\mathbf{z}) \) are the Lagrange basis polynomials evaluated at \( \mathbf{z} \). Each basis polynomial is defined as:
\[
\chi_\mathbf{i}(\mathbf{z}) := \prod_{j=1}^{v} \left(z_j i_j + (1 - z_j)(1 - i_j)\right),
\]
for \( \mathbf{i} = (i_1, \dots, i_v) \in \{0,1\}^v \).

The prover commits to the coefficients \( \mathbf{a} = [a_0, \dots, a_{2^v-1}] \) by sending a group element \( P_1 = \mathbf{g}^\mathbf{a} \), where \( \mathbf{g} \) is a vector of public generators. The verifier then checks whether \( q(\mathbf{z}) = \langle \mathbf{a}, \mathbf{b} \rangle \), where \( \mathbf{b} = [\chi_0(\mathbf{z}), \dots, \chi_{2^v-1}(\mathbf{z})] \).

\subsection{The sum-check protocol} 
Suppose the prover and verifier share knowledge of a \( v \)-variate polynomial \( f(x_1, \dots, x_v) \), where each variable has finite degree. The \textbf{sum-check protocol}, used within an interactive proof (IP) system, enables the prover to send a value \( w \in \mathbb{F} \) to the verifier and convince her that
\[
w = \sum_{x_1 \in \{0,1\}} \sum_{x_2 \in \{0,1\}} \cdots \sum_{x_v \in \{0,1\}} f(x_1, \dots, x_v).
\]
This protocol proceeds in \( v \) rounds, with each round consisting of one message from the prover and one from the verifier. The prover’s total computational cost is only a small multiple of the cost of directly computing \( w \), making the protocol highly efficient compared to alternative approaches~\cite{thaler2022proofs}. A detailed description is provided in Algorithm 4 in the supplementary material.

In the final step of Algorithm 4, the verifier must evaluate the multivariate polynomial \( f \) at a randomly chosen point. However, in our setting, downloading all individual coefficients of \( f(\mathbf{x}) \) would incur prohibitive communication overhead. Furthermore, as we later show, the polynomial \( f(\mathbf{x}) \) admits a decomposition:
\[
f(\mathbf{x}) = f^{(1)}(\mathbf{y}_1, \mathbf{x}) \cdot f^{(2)}(\mathbf{x}, \mathbf{y}_2),
\]
where \( \mathbf{y}_1 \) and \( \mathbf{y}_2 \) are fixed inputs, and both \( f^{(1)} \) and \( f^{(2)} \) are multilinear functions.

To mitigate communication cost, we employ binding commitments to the coefficients of \( f^{(1)} \) and \( f^{(2)} \), denoted \( P_1 \) and \( P_2 \), respectively, using the commitment scheme described in the previous subsection. These commitments allow the verifier and prover to independently execute two instances of Algorithm 3 (see supplementary material) to evaluate \( f^{(1)} \) and \( f^{(2)} \) at the desired points. The verifier then computes \( f(\mathbf{x}) \) as the product of the two evaluations.

This leads to Algorithm 5 (see supplementary material), which augments the classical sum-check protocol with polynomial commitments to reduce communication overhead. However, this efficiency gain comes at the cost of increased computational effort for the verifier in the final step. Overall, the total communication complexity of the protocol is dominated by Step 6 of Algorithm 5, and scales with the sum of the degrees of the involved polynomials, i.e.,
$O\left(\sum_{i=1}^{v} \deg(f_i)\right)$
~\cite{bagad2024sum}.

\subsection{Aggregated range proof}
Let \( \mathbf{g} \in \mathbb{G}^{mn} \) be a globally known vector consisting of \( mn \) generators of the group \( \mathbb{G} \). The prover sends an element \( P \in \mathbb{G} \) to the verifier and claims knowledge of a vector \( \mathbf{v} \in \mathbb{F}^m \) such that:
\[
P = \mathbf{g}_{[:m]}^{\mathbf{v}}, \quad \text{where } v_j \in [0, 2^n - 1] \text{ for all } j \in [1, m].
\]
Algorithm 6 in the supplementary material, based on the Bulletproofs protocol~\cite{bunz2018bulletproofs}, enables the prover to efficiently convince the verifier of the validity of this claim.

\subsection{Threat Model}
\label{sec:threatmodel}
We consider a two‑party interactive protocol between an untrusted prover \(\mathcal{P}\) and a probabilistic polynomial‑time verifier \(\mathcal{V}\).  
The prover performs neural network inference on behalf of the verifier and may deviate arbitrarily (e.g., incorrect computation, forged outputs, wrong model).  
The verifier is honest but computationally bounded.  

Public parameters: a finite field \(\mathbb{F}_p\) (prime \(p\) with \(>128\) bits), a group \(\mathbb{G}\) where the discrete logarithm problem is hard, and public generators \(\mathbf{g},\mathbf{h},u\).  
No trusted setup is required beyond random generator selection.

Security requirements:
\begin{itemize}
    \item \textbf{Correctness:} If the prover follows the protocol and computes correctly, the verifier always accepts.
    \item \textbf{Soundness:} If the prover cheats (output is incorrect), the verifier rejects with probability \(1 - \text{negl}(\lambda)\), where \(\lambda\) is the security parameter.
\end{itemize}
Zero‑knowledge is not required – the verifier learns only the input, output, and commitments.

\section{The Proposed \texttt{Range-Arithmetic} Scheme}
\label{Proposed method}
In this section, we introduce the core components of our proposed scheme. We begin by addressing the verification of arithmetic operations, with a focus on matrix multiplication. This is motivated by the fact that many fundamental operations in neural networks, including fully connected layers, convolutions, and batch normalization, can be expressed as matrix multiplications~\cite{bengio2017deep, ju2021efficient}. We then present our approach for handling non-arithmetic operations, specifically rounding, and describe how arithmetic and non-arithmetic components are integrated through a composition step. Next, we extend the framework to support verification of the ReLU activation function. Finally, we outline the complete procedure for verifying inference in neural networks.

\subsection{Verification of Matrix Multiplication Concatenated with Rounding}
\label{sec:problem}
Consider a scenario where the prover holds two matrices, \( \mathbf{A} \in \mathbb{F}^{n \times m} \) and \( \mathbf{B} \in \mathbb{F}^{m \times k} \). The prover computes their product \( \mathbf{C} = \mathbf{A} \mathbf{B} \), and then applies a rounding operation to obtain \( \mathbf{C}^{\prime} = \mathfrak{R}(\mathbf{C}) \). The goal is for the prover to convince the verifier of the correctness of these computations without requiring the verifier to recompute them.

The verification algorithm must satisfy the security properties defined in Section~\ref{sec:threatmodel} (correctness and soundness) and additionally meet the following efficiency criteria:
\begin{itemize}
    \item \textbf{Efficiency:} The communication complexity should be \(O(\log nmk)\), the prover's computational complexity \(O(nmk)\), and the verifier's computational complexity \(O(nm + mk)\).
\end{itemize}

\subsubsection{Verification of Matrix Multiplication}
\label{VerArith}

Inspired by established methods in the literature~\cite{dao2024more, thaler2022proofs}, we reformulate the verification of matrix multiplication $\mathbf{C} = \mathbf{A} \mathbf{B}$ as a sum-check verification problem. Consider three matrices $\mathbf{A} \in \mathbb{F}^{n \times m}$, $\mathbf{B} \in \mathbb{F}^{m \times k}$, and $\mathbf{C} \in \mathbb{F}^{n \times k}$. For simplicity, we assume that $m$, $n$, and $k$ are powers of two. Matrix rows and columns are indexed from $0$ to one less than the total number of rows and columns, respectively.

To reference matrix elements, we define the following functions:
$f_{\mathbf{A}}: \{0,1\}^{\log n \times \log m} \rightarrow \mathbb{F}$, $f_{\mathbf{B}}: \{0,1\}^{\log m \times \log k} \rightarrow \mathbb{F}$, and $f_{\mathbf{C}}: \{0,1\}^{\log n \times \log k} \rightarrow \mathbb{F}$
These functions take the binary representations of row and column indices as input and return the corresponding matrix element. For instance, if $\mathbf{A}$ is a $4 \times 4$ matrix with element $a_{0,2} = 57$, then $f_{\mathbf{A}}([0,0], [1,0]) = 57$, where $[0,0]$ specifies the first row and $[1,0]$ specifies the third column.

Under this formulation, verifying the correctness of the matrix multiplication reduces to checking that, for all $\mathbf{i} \in {0,1}^{\log n}$ and $\mathbf{j} \in {0,1}^{\log k}$, the following equality holds:
\begin{equation}
f_{\mathbf{C}}(\mathbf{i}, \mathbf{j}) = \sum_{\boldsymbol{\ell} \in {0,1}^{\log m}} f_{\mathbf{A}}(\mathbf{i}, \boldsymbol{\ell}) \cdot f_{\mathbf{B}}(\boldsymbol{\ell}, \mathbf{j}).
\end{equation}

As we saw in Definition~\ref{MLEdef}, we can construct MLEs (Multilinear Extensions) $\tilde{a}:\mathbb{F}^{\log n \times \log m} \rightarrow \mathbb{F}$, $\tilde{b}:\mathbb{F}^{\log m \times \log k} \rightarrow \mathbb{F}$, and $\tilde{c}:\mathbb{F}^{\log m \times \log k} \rightarrow \mathbb{F}$ based on functions $f_{\mathbf{A}}$, $f_{\mathbf{B}}$, and $f_{\mathbf{C}}$, respectively. These MLEs are multilinear polynomials that, for binary inputs, yield outputs resembling those of $f_{\mathbf{A}}$, $f_{\mathbf{B}}$, and $f_{\mathbf{C}}$, respectively. Therefore, the above equality is equivalent to establishing the following equality for $\mathbf{i} \in \{0,1\}^{\log n}$ and $\mathbf{j} \in \{0,1\}^{\log k}$
\begin{equation}
\label{MLEeq}
\tilde{c}(\mathbf{i}, \mathbf{j}) = \sum_{\boldsymbol{\ell} \in \{0,1\}^{\log m}} \tilde{a} (\mathbf{i}, \boldsymbol{\ell})  \tilde{b}(\boldsymbol{\ell}, \mathbf{j}).
\end{equation}

We aim to verify \eqref{MLEeq} for all values of $\mathbf{i}$ and $\mathbf{j}$. We observe that since $\tilde{a}$ is linear with respect to the variable $\mathbf{i}$ and $\tilde{b}$ is linear with respect to the variable $\mathbf{j}$, the polynomial on the right-hand side of the equation is linear with respect to variables $\mathbf{i}$ and $\mathbf{j}$. As we discussed, MLE is unique. Therefore, if $\tilde{c}$ is equal to the right-hand side for all possible binary values of $\mathbf{i} \in \{0,1\}^{\log n}$ and $\mathbf{j} \in \{0,1\}^{\log k}$, then \eqref{MLEeq} is of the form of equality of two polynomials. If matrix multiplication $\mathbf{C} = \mathbf{A} \mathbf{B}$ has been performed correctly, then \eqref{MLEeq} must hold for all values in the domain $\mathbf{i} \in \mathbb{F}^{\log n}$ and $\mathbf{j} \in \mathbb{F}^{\log k}$. Thus, we can use the lemma~\ref{Schwartz}, which allows us to verify the equality of two polynomials with high probability by checking their equality at a random point. Assume $\mathbf{r}_1 \in \mathbb{F}^{\log n}$ and $\mathbf{r}_2 \in \mathbb{F}^{\log k}$ are chosen uniformly at random. Then if  
\begin{equation} 
\label{randomMLE}
\tilde{c}(\mathbf{r}_1, \mathbf{r}_2) = \sum_{\boldsymbol{\ell} \in \{0,1\}^{\log m}} \tilde{a} (\mathbf{r}_1, \boldsymbol{\ell}) . \tilde{b}(\boldsymbol{\ell}, \mathbf{r}_2)
\end{equation}
holds, the matrix multiplication $\mathbf{C} = \mathbf{A} \mathbf{B}$ has been computed correctly with the probability of at least $\frac{|\mathbb{F}| - v}{|\mathbb{F}|}$. 

In~\eqref{randomMLE}, the left-hand side is a single number, while the right-hand side is actually a sum over the polynomial $\gamma (\boldsymbol{\ell}) := \tilde{a} (\mathbf{r}_1, \boldsymbol{\ell}) . \tilde{b}(\boldsymbol{\ell}, \mathbf{r}2)$, for all values $\boldsymbol{\ell} \in \{0,1\}^{\log m}$.  Therefore, we can reduce the matrix multiplication verification problem to a sum-check problem over $\sum_{\boldsymbol{\ell}} \gamma (\boldsymbol{\ell})$ and apply Algorithm 5 (see supplementary material) to verify it. It is observed that $\gamma (\boldsymbol{\ell})$ represents the multiplication of two multilinear functions $\tilde{a} (\mathbf{r}_1, \boldsymbol{\ell})$ and $\tilde{b}(\boldsymbol{\ell}, \mathbf{r}_2)$. The prover can compute the commitment to the coefficients of these functions as outlined in Section~\ref{secPolycommit}.

\subsubsection{Verification of Rounding}
Here, we elucidate how the range proof algorithm ensures the validity of the operation $\mathfrak{R}$. It is worth recalling that $\mathfrak{R}(x) = \frac{x + 2^{s-1} - (x + 2^{s-1} \mod 2^s)}{2^s}$, and when $\mathfrak{R}$ is applied to a matrix or vector, the operation is executed on each element individually. We note that $-2^{s-1} \leq (x + 2^{s-1} \mod 2^s) - 2^{s-1} < 2^{s-1}$, which constitutes the truncated portion. In the rounding process, this part is discarded, which makes the numerator of the fraction $\frac{x + 2^{s-1} - (x + 2^{s-1} \mod 2^s)}{2^s}$ a multiple of $2^s$. Subsequently, division by $2^s$ moves its binary representation as $s$ bits to the right. Suppose that matrix $\mathbf{A} \in \mathbb{F}^{\log n \times \log m}$ has been rounded to $\mathbf{A}' = \mathfrak{R}(\mathbf{A})$. Let $\mathbf{E} := \mathbf{A} - 2^s \times \mathbf{A}'$ denote the discarded part during the rounding process, and let $\mathbf{D} := \mathbf{A} - \mathbf{E}$ denote the numerator of the fraction in $\mathfrak{R}$. The correctness of the rounding process is equivalent to ensuring that (1) all entries \( e_i \) of \( \mathbf{E} \) lie within the interval \( -2^{s-1} \leq e_i < 2^{s-1} \), and (2) all elements \( d_i \) of \( \mathbf{D} \) are integer multiples of \( 2^s \).

The first condition limits the discarded number during rounding to a small, standard interval. The second condition ensures $a_i - e_i$ is divisible by $2^s$ without wrap-around. To prevent wrap-around, we require $a'_i = \frac{d_i}{2^s}$ to fall within $-2^{t+1} \leq a'_i < 2^{t+1}$, guaranteeing at least $s$ trailing zeros in $d_i$. This condition suffices to avoid wrap-around, leading to two necessary range proofs, which are checked using an aggregated range proof algorithm. The correctness of this transformation is formally stated in Theorem~\ref{thm:rounding} in Appendix~A.

\subsubsection{Concatenation of Matrix Multiplication and Rounding} 
We now explain how to combine the arithmetic and rounding components using the techniques described above. Suppose that the prover possesses two matrices, $\mathbf{A} \in \mathbb{F}^{n \times m}$ and $\mathbf{B} \in \mathbb{F}^{m \times k}$. After computing their product, $\mathbf{C} = \mathbf{A} \mathbf{B}$, the prover performs rounding operations on the output, resulting in $\mathbf{C'} = \mathfrak{R}(\mathbf{C})$. By executing Algorithm 1 on the input $\mathbf{g}, \mathbf{h} \in \mathbb{G}^{mn}$, $u \in \mathbb{G}$, $\mathbf{A} \in \mathbb{F}^{n \times m}$, and $\mathbf{B} \in \mathbb{F}^{m \times k}$, the prover convinces the verifier of the accuracy of these computations without transmitting the large matrices $\mathbf{A}$, $\mathbf{B}$, $\mathbf{C}$, and $\mathbf{C'}$. In the input tuple, $\mathbf{g}, \mathbf{h} \in \mathbb{G}^{\tau}$ and $u \in \mathbb{G}$ are some globally known generators. Here, $\tau$ denotes the maximum value among $\left (nks, nk(t+1), mk, mn \right )$. This algorithm leverages all the methodologies introduced and examined in prior subsections. 

\begin{figure*}[htbp] 
\centering 
\begin{minipage}{\textwidth} 
\begin{algorithm}[H]
\caption{Verifying a calculation with both arithmetic and non-arithmetic layers}\label{algorithm1}
\begin{algorithmic}[1]
\Require{The algorithm inputs are: $\mathbf{g}, \mathbf{h} \in \mathbb{G}^{\tau}$, $u \in \mathbb{G}$, $\mathbf{A} \in \mathbb{F}^{n \times m}$, $\mathbf{B} \in \mathbb{F}^{m \times k}$.} 
\Require{Verifier has $\mathbf{g}, \mathbf{h} \in \mathbb{G}^{j}$, $u \in \mathbb{G}$.}  

\Require{Prover has $\mathbf{g}, \mathbf{h} \in \mathbb{G}^{j}$, $u \in \mathbb{G}$, $\mathbf{A} \in \mathbb{F}^{n \times m}$, $\mathbf{B} \in \mathbb{F}^{m \times k}$.}  

\Ensure{Verifier receives $P_A$, $P_B$, $P_C$, and $P_{C'}$ as commitments to specific matrices $\mathbf{A}$, $\mathbf{B}$, $\mathbf{C}$, and $\mathbf{C'}$, respectively. Then Verifier accepts that the relations $\mathbf{C} = \mathbf{A} \mathbf{B}$ and $\mathbf{C'} = \mathfrak{R}(\mathbf{C})$ hold between them.} 
\State Prover computes the commitments of $\mathbf{A}$ and $\mathbf{B}$ as $P_A$ and $P_B$ respectively. then sends $P_A$ and $P_B$ to Verifier. 
\State Prover calculates $\mathbf{C} =  \mathbf{A} \mathbf{B}$. 
\State Prover computes the commitment of $\mathbf{C}$ as $P_C$ then sends $P_C$ to Verifier. 
\State Verifier selects two random vectors $\mathbf{r}_1 \in \mathbb{F}^{\log n}$ and $\mathbf{r}_2 \in \mathbb{F}^{\log k}$ and sends $\mathbf{r}_1$ and $\mathbf{r}_2$ to Prover. 
\State Prover constructs the polynomials $\tilde{a}: \mathbb{F}^{\log nm} \rightarrow \mathbb{F}$, $\tilde{b}: \mathbb{F}^{\log mk} \rightarrow \mathbb{F}$, $\tilde{c}: \mathbb{F}^{\log nk} \rightarrow \mathbb{F}$ and $\gamma(\mathbf{z}): \mathbb{F}^{\log m} \rightarrow \mathbb{F}$ according to the description provided in the Section~\ref{VerArith}. 

\Statex \textbf{Note}: The commitments to the coefficients of the polynomial \(\tilde{a}\), \(\tilde{b}\), and \(\tilde{c}\) are represented by $P_A$, $P_B$, and $P_C$ respectively, all of which are already held by the verifier. Furthermore, it is given that $\gamma(\mathbf{z}) = \tilde{a} (\mathbf{r}_1, \mathbf{z}) \tilde{b} ( \mathbf{z}, \mathbf{r}_2)$.

\State Prover and Verifier run Algorithm 5 (see supplementary material) on the input $\left(\mathbf{g}_{[:m]}, \mathbf{h}_{[:m]}, u, P_A, P_B, \gamma \right)$. Verifier obtains $w$.  

\State Prover and Verifier run Algorithm 3 (see supplementary material) on the input $\left (\mathbf{g}_{[:nk]}, \mathbf{h}_{[:nk]}, u, P_C, \tilde{c}, (\mathbf{r_1}, \mathbf{r_2}) \right )$. Verifier obtains $\tilde{c}(\mathbf{r_1}, \mathbf{r_2})$.

\State Verifier checks $w = \tilde{c}(\mathbf{r_1}, \mathbf{r_2})$.

\State Prover computes $\mathbf{C}' = \mathfrak{R}(\mathbf{C})$. 
\State Prover computes the commitment of $\mathbf{C}'$ as $P_{C'}$ then sends $P_{C'}$ to Verifier. 

\State Prover and Verifier run Algorithm 6 (see supplementary material) on the input 
\Statex $\left (
\mathbf{g}_{[:nks]}, \mathbf{h}_{[:nks]}, u, P_C / P_{C'}^{2^s} \times \mathbf{g}_{[:nk]}^{2^{s-1}} \in \mathbb{G}, \mathbf{C} + 2^{s-1} \in \mathbb{F}^{nk}
 \right )$. 
\Statex \Comment{Note that $P_C / P_{C'}^{2^s}$ is the commitment to $E = \mathbf{C} - 2^s \times \mathbf{C}'$. This step verifies for all elements of $\mathbf{E}$ we have $-2^{s-1} \leq e_i < 2^{s-1}$.}

\State Prover and Verifier run Algorithm 6 (see supplementary material) on the input 
\Statex $\left (
\mathbf{g}_{[:nk(t+1)]}, \mathbf{h}_{[:nk(t+1)]}, u, P_{C'} \times \mathbf{g}_{[:nk]}^{2^{t+1}} \in \mathbb{G}, \mathbf{C} + 2^{t+1} \in \mathbb{F}^{nk}
 \right )$. 
\Statex \Comment{This step verifies for all elements of $\mathbf{C}'$ we have $-2^{t+1} \leq c'_i < 2^{t+1}$.}

\Statex If all checks pass, Verifier accepts that $P_A$, $P_B$, $P_C$, and $P_{C'}$ are commitments to certain matrices $\mathbf{A}$, $\mathbf{B}$, $\mathbf{C}$, and $\mathbf{C'}$, respectively, where the relations $\mathbf{C} = \mathbf{A} \mathbf{B}$ and $\mathbf{C'} = \mathfrak{R}(\mathbf{C})$ hold between them.   
\end{algorithmic}
\end{algorithm}
\end{minipage}
\end{figure*}

Our proposed algorithm is adaptable, allowing the prover to reuse it for matrices $\mathbf{A}$ and $\mathbf{B}$ from prior computations or for matrix $\mathbf{C}'$ in subsequent computations without transmitting large intermediary matrices to the verifier. Additionally, the algorithm requires no preprocessing and avoids the need to encode computations using large sparse matrices, unlike techniques such as R1CS or arithmetic circuits~\cite{parno2016pinocchio}.

\subsection{Verification of ReLU activation function} 
\label{VerArith2} 
We verify the ReLU function, defined as $\text{ReLU}(x) = \max\{0, x\} = \frac{x + |x|}{2}$, by applying it element-wise to a matrix. Suppose the verifier has a commitment to a matrix $\mathbf{A} \in \mathbb{F}^{n \times k}$ and receives a commitment to $\mathbf{B} \in \mathbb{F}^{n \times k}$. To verify that $\mathbf{B} = \text{ReLU}(\mathbf{A})$, the prover first computes $\mathbf{Y} = |\mathbf{A}|$ and commits to it. The prover then proves that all elements of $\mathbf{Y}$ are non-negative using a Range proof. Next, the prover proves that for each element $a_{i}$ in $\mathbf{A}$ and $y_{i}$ in $\mathbf{Y}$, $a_{i}^{2} = y_{i}^{2}$ using the following equation:

\begin{equation} 
\label{Equality_relu1} 
\vec{0} = \sum_{\mathbf{s} \in \{0,1\}^{\log nk}} \tilde{I}  \left (\mathbf{x}, \mathbf{s} \right ) . \left ( \tilde{a}^2 \left (\mathbf{s} \right ) - \tilde{y}^2 \left (\mathbf{s} \right ) \right ) \; \; \forall \mathbf{x} \in \{0, 1 \}^{\log nk}
\end{equation}

In \eqref{Equality_relu1}, matrices are encoded using MLE, where $\tilde{I}$ is the MLE of the identity matrix, and $\tilde{a}$ and $\tilde{y}$ encode the elements of $\mathbf{A}$ and $\mathbf{Y}$, respectively. This equation ensures that each element in $\mathbf{Y}$ equals $|\mathbf{A}|$. By the uniqueness of MLEs, the right-hand side polynomial is zero. Instead of checking the equation at every point, the prover and verifier use the sum-check protocol at a random point $\mathbf{r}_1$ chosen by the verifier.

\begin{equation} 
\label{Equality_relu2} 
0 = \sum_{\mathbf{x} \in \{0,1\}^{\log nk}} \tilde{I}  \left (\mathbf{s}, \mathbf{x} \right ) . \left ( \tilde{a}^2 \left (\mathbf{x} \right ) - \tilde{y}^2 \left (\mathbf{x} \right ) \right ) 
\end{equation} 

According to \eqref{Equality_relu2}, the sum-check is performed on the polynomial $f(\mathbf{x}) = \tilde{I}  \left (\mathbf{s}, \mathbf{x} \right ) . \left ( \tilde{a}^2 \left (\mathbf{x} \right ) - \tilde{y}^2 \left (\mathbf{x} \right ) \right )$. The details are provided in Algorithm 7 in the supplementary material. We used Algorithm 8 (see supplementary material) to make ReLU verifiable, which incorporates Algorithm 7 and other components mentioned in this subsection.

\subsection{Verification process of the neural network}
To verify the inference of a neural network, we consider the model as a sequence of layers, each corresponding to an operation, e.g.,  matrix multiplication or ReLU activation. 
Each layer is verified independently. The prover first sends a commitment of the layer’s output to the verifier and then uses this commitment to prove that the corresponding computation was performed correctly, following the proposed algorithm. This modular approach enables the composition of verification algorithms across different layers, allowing them to be reused and combined in arbitrary order and quantity. Please note that output of each layer is the input to the next layer and verifier has access to the inputs. It is important to note that the output of each layer serves as the input to the subsequent layer. An illustrative case study demonstrating this process is presented in Section~\ref{Experimentalresults}.

\section{Security Analysis}
\label{sec:security}

We analyse the security of each building block and their composition.

\subsection{Matrix Multiplication Verification}
The verification reduces to checking a polynomial identity via the Schwartz–Zippel lemma (Lemma~\ref{Schwartz}).  
The verifier chooses random \(\mathbf{r}_1,\mathbf{r}_2\). If the claimed product \(\mathbf{C}\) is incorrect, then  
\[
\Pr[\tilde{c}(\mathbf{r}_1,\mathbf{r}_2) = \sum_{\boldsymbol{\ell}} \tilde{a}(\mathbf{r}_1,\boldsymbol{\ell})\tilde{b}(\boldsymbol{\ell},\mathbf{r}_2)] \le \frac{v}{|\mathbb{F}|}.
\]  
The sum‑check protocol (Algorithm 5 in the supplementary material) is sound: a cheating prover can force acceptance with probability at most \(\frac{\text{poly}(v)}{|\mathbb{F}|}\) per round~\cite{thaler2022proofs}.  
The polynomial commitment and inner‑product argument rely on the discrete logarithm assumption and are sound under the Bulletproofs analysis~\cite{bunz2018bulletproofs}.  
Overall, the probability of falsely accepting a wrong matrix multiplication is bounded by \(\frac{v}{|\mathbb{F}|} + \text{negl}(\lambda)\).

\subsection{Rounding Verification}
Rounding is verified via two aggregated range proofs (Algorithm 6 in the supplementary material) that check:
\[
-2^{s-1} \le e_i < 2^{s-1} \quad\text{and}\quad -2^{t+1} \le c'_i < 2^{t+1}.
\]  
Theorem~\ref{thm:rounding} (see Appendix~A) shows that if both conditions hold then \(c'_i = \mathfrak{R}(c_i)\) with no wrap‑around modulo \(p\).  
The range proof (Bulletproofs) is perfectly complete and computationally sound under the discrete logarithm assumption; its soundness error is negligible in the group order.

\subsection{ReLU Verification}
ReLU is verified by (1) a range proof that all entries of \(\mathbf{Y}=|\mathbf{A}|\) are non‑negative, and (2) a sum‑check proving \(a_i^2 = y_i^2\) via the polynomial  
\(f(\mathbf{x}) = \tilde{I}(\mathbf{s},\mathbf{x})(\tilde{a}^2(\mathbf{x}) - \tilde{y}^2(\mathbf{x}))\).  
If both hold, then \(\mathbf{Y}=|\mathbf{A}|\) and the final check \(P_B^2 = P_A \cdot P_Y\) ensures \(\mathbf{B} = (\mathbf{A}+\mathbf{Y})/2 = \text{ReLU}(\mathbf{A})\).  
Soundness follows from the same arguments as above.

\subsection{Composition}
The protocol verifies each layer independently. A single incorrect layer causes rejection with high probability. For a network of constant depth, the total soundness error remains negligible (by union bound).

\subsection{Security Parameters}
We instantiate with \(|\mathbb{F}_p| > 2^{128}\) (e.g., a 256‑bit prime) so that \(\frac{v}{|\mathbb{F}|} < 2^{-100}\). The group \(\mathbb{G}\) is chosen as an elliptic curve (e.g., secp256k1 or BLS12‑381) with 128‑bit security. Under these parameters, the protocol achieves computational soundness with security level equivalent to 128 bits.

\section{Experimental results} 
\label{Experimentalresults} 
This section evaluates the performance of our proposed algorithm for fixed-point matrix multiplication, in which standard matrix multiplication is followed by a rounding step. The evaluation focuses on three key metrics: the prover’s runtime, the verifier’s runtime, and the communication cost between them. All algorithms are implemented in Python and tested on an Asus X515 laptop. The implementation is available on GitHub\footnote{https://github.com/trainingzk/Range-Arithmetic}.

Figure~\ref{fig:mesh3} presents the performance of Algorithm 1 in verifying the relation \( \mathbf{C'} = \mathfrak{R}(\mathbf{A} \mathbf{B}) \), where \( \mathbf{A} \) and \( \mathbf{B} \) are \( 64 \times 64 \) matrices. As the numeric range increases, the cost of the rounding operation—dominated by the complexity of Algorithm 6 (see supplementary material)—surpasses that of the matrix multiplication.

Figure~\ref{fig:mesh4} illustrates the effect of matrix size on the prover and verifier runtimes. As expected, both increase with matrix size, while the communication overhead grows logarithmically.

Finally, we compare our method to the state-of-the-art ~\cite{dao2024more}, which verifies sequential matrix multiplications of depth \( n \). Unlike method of ~\cite{dao2024more}, which lacks support for rounding and incurs significant storage overhead, our approach integrates rounding efficiently and reduces computational cost for both prover and verifier as the depth increases. This advantage is depicted in Figure~\ref{fig:mesh8}.

\begin{figure}[ht]
\centering
\includegraphics[width=0.9\columnwidth]{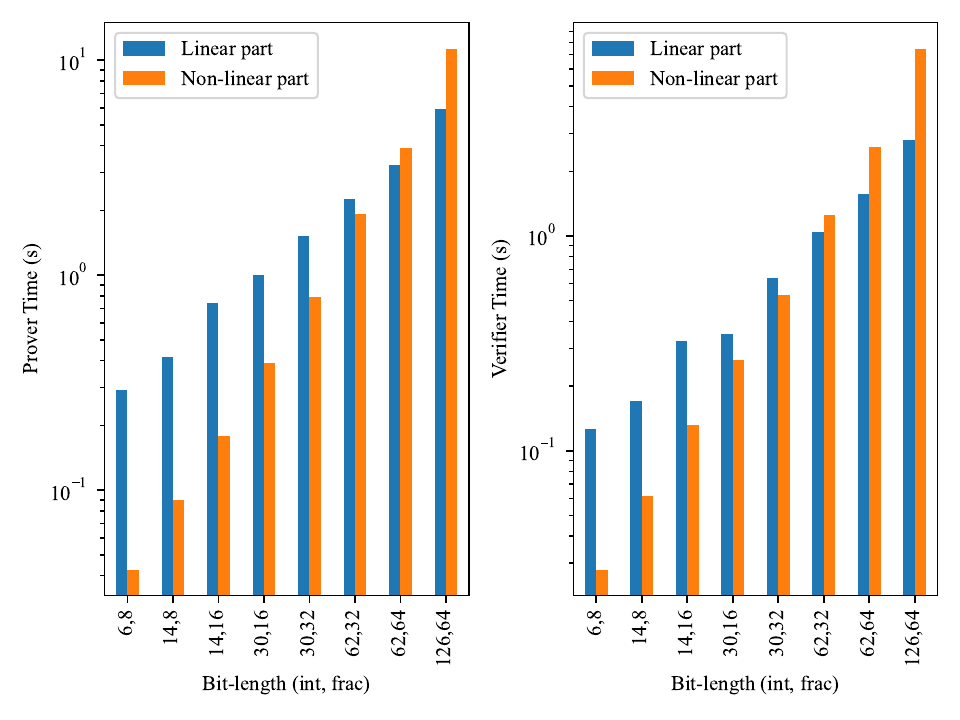}
\caption{Runtime of the arithmetic and non-arithmetic parts for the verifier and the prover in the matrix multiplication.}
\label{fig:mesh3}
\end{figure}

\begin{figure}[ht]
\centering
\includegraphics[width=0.9\columnwidth]{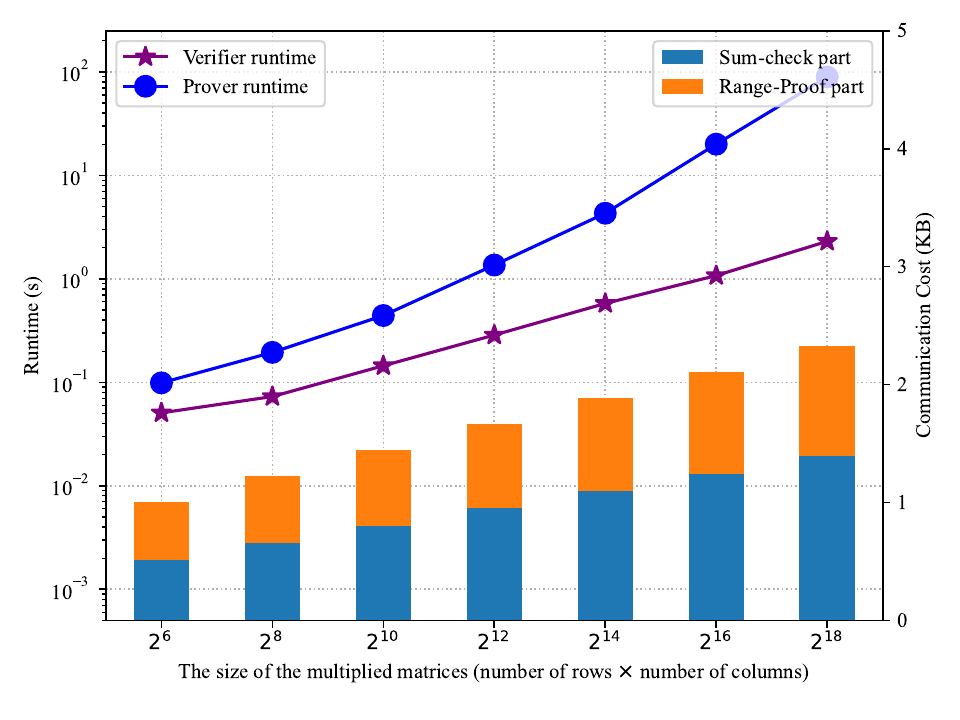}
\caption{Impact of matrix size on the prover and the verifier runtime, as well as communication load.}
\label{fig:mesh4}
\end{figure}

\subsection{A Case Study}
\label{sec:Casestudy}
In the previous section, we evaluated the performance of our proposed method for verifying matrix multiplication. Building on this foundation, we now demonstrate its application to verifying the inference process of a neural network trained on the MNIST dataset~\cite{deng2012mnist}. The network consists of four fully connected layers, each performing matrix multiplication using fixed-point arithmetic, followed by a ReLU activation function. The input vector has a dimension of 784, and the model contains approximately 10{,}000 parameters.

All computations are carried out in fixed-point arithmetic, with weights and inputs represented using 6 integer bits and 8 fractional bits. The trained model achieves an accuracy of approximately 95\%. For verification, matrix multiplications are checked using Algorithm 1, while ReLU activations are verified using a combination of range proofs and the sum-check protocol, as detailed in Algorithm 8 (see supplementary material) and Subsection~\ref{VerArith2}. The prover's runtime is approximately 230 milliseconds, and the verifier's runtime is approximately 154 milliseconds.

\section{Future directions}
\label{Futuredirection}
In this paper, we focused on the verifying inference on  neural network. Future work includes enriching the model to support a wider range of network architectures and layers, incorporating privacy-preserving features, benchmarking against alternative verification schemes, exploring additional evaluation metrics, and extending the approach to other collaborative or trust-sensitive environments such as federated learning.

\bibliographystyle{IEEEtran}
\bibliography{references}

\appendix
\section{Rounding correctness theorem}
\label{app:rounding}

\begin{thm}
\label{thm:rounding}
Let \( a \) be a fixed-point number with 1 sign bit, \( t \) integer bits, and \( s \) fractional bits. Suppose the prover sends integers \( a'_1 \) and \( e_1 \) such that the following conditions hold:
\begin{itemize}
    \item \textbf{Condition 1:} \( a \equiv e_1 + 2^s \cdot a'_1 \pmod{p} \), where \( p \) is a prime with at least \( t + s + 3 \) bits.
    \item \textbf{Condition 2:} \( -2^{t+1} \leq a'_1 < 2^{t+1} \).
    \item \textbf{Condition 3:} \( -2^{s-1} \leq e_1 < 2^{s-1} \).
\end{itemize}
Then the verifier can conclude that \( a'_1 = \mathfrak{R}(a) \).
\end{thm}

\begin{proof}
According to the definition, 
\[
\mathfrak{R}(a) = \frac{a + 2^{s-1} - (a + 2^{s-1} \mod 2^s)}{2^s},
\]
if the prover has performed the calculations correctly, they would obtain the values \( a'_2 = \mathfrak{R}(a) \) and \( e_2 = (a + 2^{s-1} \mod 2^s) - 2^{s-1} \). It is straightforward to verify that these two values satisfy the aforementioned conditions. We will now proceed to demonstrate that \( a'_2 = a'_1 \) and \( e_1 = e_2 \).

From Condition 1, we observe that
\[
a \equiv e_1 + 2^s \cdot a'_1 \equiv e_2 + 2^s \cdot a'_2 \pmod{p},
\]
which implies that 
\[
p \mid (e_2 - e_1) + 2^s \times (a'_2 - a'_1).
\]
Moreover, from Condition 2, it follows that
\[
-2^{t+1} \leq a'_2 \leq 2^{t+1} - 1,
\]
and
\[
-2^{t+1} + 1 \leq -a'_1 \leq 2^{t+1},
\]
which leads to
\[
-2^{t+2} + 1 \leq a'_2 - a'_1 \leq 2^{t+2} - 1.
\]
In addition, from Condition 3, we know that
\[
-2^{s-1} \leq e_2 \leq 2^{s-1} - 1,
\]
and
\[
-2^{s-1} + 1 \leq -e_1 \leq 2^{s-1},
\]
which implies that
\[
-2^{s} + 1 < e_2 - e_1 < 2^s - 1.
\]
Thus, we have the following inequality:
\[
-2^{t+s+2} + 2^s \leq 2^s (a'_2 - a'_1) \leq 2^{t+s+2} - 2^s.
\]
From these inequalities, we deduce that
\[
-2^{t+s+2} + 1 \leq (e_2 - e_1) + 2^s \times (a'_2 - a'_1) \leq 2^{t+s+2} - 1.
\]
Since \( p \) has at least \( s + t + 3 \) bits, we can conclude that
\[
2^{s+t+2} \leq p.
\]
Thus, since \( p \) divides \( (e_2 - e_1) + 2^s \times (a'_2 - a'_1) \), and this expression is smaller in magnitude than \( p \), the only possible solution is
\[
(e_2 - e_1) + 2^s \times (a'_2 - a'_1) = 0.
\]
Given the bounds on \( e_1 \) and \( e_2 \) as well as the multiple of \( 2^s \), it follows that
\[
e_1 = e_2,
\]
and consequently,
\[
a'_1 = a'_2.
\]
\end{proof}

\end{document}